\documentclass[envcountsame,runningheads]{llncs}

\usepackage{ALgo}
\usepackage[utf8]{inputenc}
\usepackage{graphicx}

\def\s#1{\textrm{\boldmath $#1$}}

\def\pref#1{\mbox{pref(\s{#1})}}
\def\suff#1{\mbox{suff(\s{#1})}}

\def\:{\mbox{\ :\ }}

\def\+{\!+\!}
\def\-{\!-\!}

\def\pref(#1,#2){$#1$ is a prefix of $#2$}
\def\suff(#1,#2){$#1$ is a suffix of $#2$}
\def\reg(#1,#2){$#2$ is $#1$-regular}
\def\notreg(#1,#2){$#2$ is not $#1$-regular}

\def\alphabet{\Sigma}
\def\dalphabet{\Delta}
\def\Deltasigma{\dalphabet_\alphabet}

\def\rank{\iit{rank}}

\def\pp{\mathinner{\ldotp\ldotp}}

\def\BWT{\textit{BWT}}

\def\H{H}

\def\Step{\textit{OneStep}}
\def\Stepi{\textit{Step}}
\def\Merge{\textit{Merge}}
\def\rank{\textit{rank}}

\begin{document}

\title{Efficient pattern matching in degenerate strings with the Burrows--Wheeler transform}
\titlerunning{Pattern matching in degenerate strings} 

\author{Jacqueline W. Daykin\,$^{1,2,3}$ \and
Richard Groult\,$^4,3$ \and
Yannick Guesnet\,$^3$ \and
Thierry Lecroq\,$^3$ \and
Arnaud Lefebvre\,$^3$ \and
Martine Léonard\,$^3$ \and
Laurent Mouchard\,$^3$ \and
\'Elise Prieur-Gaston\,$^3$ \and
Bruce Watson\,$^{5,6}$}
\institute{Department of Computer Science, Aberystwyth Univ. (Mauritius Branch Campus), Quartier Militaire, Mauritius \and
Department of Informatics, King’s College London, UK \and
Normandie Univ., UNIROUEN, LITIS, 76000 Rouen, France \and
Modélisation, Information et Systèmes (MIS), Univ. de Picardie Jules Verne, Amiens, France \and
Department of Information Science, Stellenbosch Univ.,  South Africa \and
CAIR, CSIR Meraka, Pretoria, South Africa}
\authorrunning{Daykin \textit{et al}}
\maketitle

\begin{abstract}
A {\it degenerate} or {\it indeterminate} string on an alphabet
 $\alphabet$ is a sequence of non-empty subsets of $\alphabet$.
Given a degenerate string \s{t} of length $n$, 
 we present a new method based on the Burrows--Wheeler transform
 for searching for a degenerate
 pattern of length $m$ in \s{t} running in $O(mn)$ time
 on a constant size alphabet  $\alphabet$.
Furthermore, it is a {\it hybrid} pattern-matching technique that works on both
 regular and degenerate strings.
A degenerate string is said to be {\it conservative} if its number of non-solid
 letters is upper-bounded by a fixed positive constant $q$; in this case we
 show that the search complexity time is $O(qm^2)$.
Experimental results show that our method performs well in practice.
\end{abstract}

\begin{keywords}
algorithm, Burrows--Wheeler transform, degenerate,
 pattern matching, string
\end{keywords}

\section{Introduction}
\label{sec-intro}

An {\it indeterminate} or {\it degenerate} string $\s{x} = \s{x}[1 \pp n]$ on an
 alphabet $\alphabet$ is a sequence of non-empty subsets of $\alphabet$.
Degenerate strings date back to the groundbreaking
 paper of Fischer \& Paterson~\cite{FP74}.
This simple generalization of a regular string, from letters to subsets of
 letters, arises naturally in diverse applications: in musicology, for instance
 the problem of finding chords that match with single notes; search tasks
 allowing for occurrence of errors such as with web interfaces and search
 engines; bioinformatics activities including DNA sequence analysis and coding
 amino acids; and cryptanalysis applications.

For regular or solid strings, the main approaches for computing all the occurrences of a
 given nonempty pattern $\s{p} = \s{p}[1\pp m]$ in a given nonempty text
 $\s{t} = \s{t}[1\pp n]$ have been window-shifting techniques, and applying the
 bit-parallel processing to achieve fast processing -- for expositions of
 classic string matching algorithms see~\cite{CL04}.
More recently the Burrows--Wheeler transform (BWT) has been tuned to this search
 task, where all the occurrences of the pattern $\s{p}$ can be found as a prefix
 of consecutive rows of the BWT matrix, and these rows are determined using a
 backward search process.
 
The degenerate pattern matching problem for degenerate strings $\s{p}$ and
 $\s{t}$ over $\Sigma$ of length $m$ and $n$ respectively is the task of finding
 all the positions of all the occurrences of $\s{p}$ in $\s{t}$, that is,
 computing every $j$ such that $\forall\,1\le i \le |\s{p}|$ it holds that
 $\s{p}[i] \cap \s{t}[i+j]\ne \emptyset$.
  
Variants of degenerate pattern matching have recently been proposed.
A degenerate string is said to be {\it conservative} if its number of non-solid
 letters is upper-bounded by a fixed positive constant $q$.
Crochemore {\it et al.}~\cite{CIKMV16} considered the matching problem of
 conservative degenerate strings and presented an efficient algorithm that can
 find, for given degenerate strings $\s{p}$ and $\s{t}$ of total length $n$
 containing $q$ non-solid letters in total, the occurrences of $\s{p}$ in
 $\s{t}$ in $O(nq)$ time, i.e. linear in the size of the input.

Our novel contribution is to implement degenerate pattern matching by modifying
 the existing Burrows--Wheeler pattern matching technique.
Given a degenerate string $\s{t}$ of length $n$, searching for either a degenerate
 or solid pattern of length $m$ in $\s{t}$ is achieved in $O(mn)$ time; in the
 conservative scenario with at most $q$ degenerate letters, the search complexity
 is $O(qm^2)$ -- competitive for short patterns.
This formalizes and extends the work implemented in BWBBLE~\cite{bwbble}.

\section{Notation and definitions}
\label{subsec-notation}

Consider a finite totally ordered alphabet $\alphabet$ of constant size
 which consists of a set of \emph{letters}.
The order on letters is denoted by the usual symbol $<$.
A {\it string} is a sequence of zero or more letters over $\alphabet$.
The set of all strings over $\alphabet$ is denoted by $\alphabet^*$ and
 the set of all non-empty strings over $\alphabet$ is denoted by $\alphabet^+$.
Note we write strings in mathbold such as \s{x}, \s{y}.
The lexicographic order (\emph{lexorder}) on strings is also denoted by the symbol $<$.

A string $\s{x}$ over $\Sigma^+$ of length $|\s{x}|=n$ is represented by $\s{x}[1\pp n]$, where
 $\s{x}[i] \in \alphabet $ for $ 1\leq i \leq n $ is 
 the $i$-th letter of $\s{x}$.
The symbol $\sharp$ gives the number of elements in a specified set.

The concatenation of two strings $\s{x}$ and $\s{y}$ is defined as the sequence of letters of $\s{x}$ followed by
 the sequence of letters of $\s{y}$ and is denoted by $\s{x}\cdot \s{y}$ or simply $\s{x}\s{y}$ when no confusion is
 possible.
 A string $\s{y}$ is a {\it substring} of $\s{x}$ if $\s{x} = \s{u}\s{y}\s{v}$, where $\s{u},\s{v} \in \alphabet^{\ast}$; 
specifically a string $\s{y} = \s{y}[1 \pp m]$ is a substring of $\s{x}$ 
if $\s{y}[1 \pp m] = \s{x}[i \pp i+m-1]$ for some $i$.
Strings $\s{u}=\s{x}[1\pp i]$ are called {\it prefixes} of $\s{x}$, and strings $\s{v}=\s{x}[i \pp n]$ are called {\it suffixes} of $\s{x}$ of length $n$
 for $1\le i \le n$.
The prefix $\s{u}$ (respectively suffix $\s{v}$) is a proper prefix (suffix) of a string $\s{x}$ if $\s{x}\not=\s{u},\s{v}$.
A string $\s{y}=\s{y}[1\pp n]$ is a {\it cyclic rotation} of $\s{x}=\s{x}[1 \pp n]$ 
 if $\s{y}[1 \pp n] = \s{x}[i \pp n]\s{x}[1\pp i-1] $ for some $1 \leq i \leq n$.
 

\begin{definition}[Burrows--Wheeler transform]
The BWT of \s{x} is defined as the pair $(L,h)$ where
 $L$ is the last column of the matrix $M_{\s{x}}$ formed by all the lexorder sorted
 cyclic rotations of $\s{x}$ and  $h$ is the index of \s{x} in this matrix.
\end{definition}

The BWT is easily invertible via a linear last first mapping~\cite{BW94} using
 an array $C$ indexed by all the letters $c$ of the alphabet $\Sigma$ and 
 defined by:
$C[c] = \sharp \{i \mid \s{x}[i] < c\}$
and 
$\rank_c(\s{x},i)$ which gives the number of occurrences of the letter $c$ in the prefix $\s{x}[1\pp i]$.

In~\cite{DW17}, Daykin and Watson present a simple
 modification of the classic BWT,
 the {\it degenerate Burrows--Wheeler transform}, which is suitable for
 clustering degenerate strings.

Given an alphabet $\alphabet$ we define a new alphabet $\Deltasigma$
 as the non-empty subsets of $\alphabet$:
 $\Deltasigma={\cal P}(\alphabet) \setminus \{\emptyset\}$.

Formally a non-empty {\it indeterminate} or {\it degenerate} string
$\s{x}$ is an element of $\Deltasigma^+$.
We extend the notion of prefix on degenerate strings as follows.
A degenerate string $\s{u}$ is called a {\it degenerate prefix} of $\s{x}$ if  $|\s{u}| \leq  |\s{x} |$ and $\s{u}[i] \cap \s{x}[i] \neq \emptyset$ $\forall 1 \leq i \leq |\s{u}|$.

A degenerate string is said to be {\it conservative} if its number of
 non-solid letters is upper-bounded by a fixed positive constant $q$.

\begin{definition} 
\label{def-ind_cyclicrotation}
A degenerate string $\s{y} = \s{y}[1\pp n]$ 
is a \emph{degenerate cyclic rotation} of a degenerate string $\s{x} = \s{x}[1\pp n]$ 
if $\s{y}[1 \pp n] = \s{x}[i \pp n]\s{x}[1\pp i-1] $ for some $1 \leq i \leq n$ (for $i =1, \s{y} =\s{x}$).
\end{definition} 

Given an order on $\Deltasigma$ denoted by the usual symbol $<$,
 we can compute the BWT of a degenerate string $\s{x}$ in the same way as for a regular string; here we apply lexorder.

\section{Searching for a degenerate pattern in a degenerate string}

Let $\s{p}$ and $\s{t}$ be two degenerate strings over $\Deltasigma$ of length $m$ and $n$ respectively.
We want to find the positions of all the occurrences or matches of $\s{p}$ in $\s{t}$ i.e. we want to compute
 every
 $j$ such that $\forall\,1\le i \le |\s{p}|$ it holds that $\s{p}[i] \cap \s{t}[i+j]\ne \emptyset$.
For determining the matching we will apply the usual backward search
 but at each step we may generate several different intervals which will be stored in a set $\H$.
Then step $k$ (processing $\s{p}[k]$ with $1\le k\le m$) of the backward search can be formalized as follows:
$$
\begin{array}{lll}
\Step(\H,k,C,\BWT=(L,h),\s{p})=
(((r,s)) &\mid & r=C[c]+\rank_c(L,i-1)+1,\cr
         &     & s=C[c]+\rank_c(L,j),\cr
&& r \le s, \, (i,j)\in \H,\, c\in\Deltasigma \mbox{ and } c \cap \s{p}[k] \ne\emptyset).
\end{array}$$

Let $\Stepi(m,C,\BWT,\s{p})=\Step(\{(1,n)\}, m,C,\BWT,\s{p})$ and
    $\Stepi(i,C,\BWT,\s{p})=\Step(\Stepi(i+1,C,\BWT,\s{p}),i,C,\BWT,\s{p})$ for $1\le i \le m-1$.
In words, $\Stepi(i,C,\BWT,\s{p})$ applies step $m$ through to $i$ of the backward search.

\begin{lemma}
\label{lemm-stepi}
The interval $(i,j)\in \Stepi(k,C,\BWT,\s{p})$ if and only if $\s{p}[k\pp m]$
 is a degenerate prefix of $M_{\s{t}}[h]$ for $i\le h\le j$.
\end{lemma}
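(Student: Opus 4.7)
The plan is to prove the statement by downward induction on $k$, following the recursive definition of $\Stepi$.

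For the base case $k=m$, $\Stepi(m,C,\BWT,\s{p})=\Step(\{(1,n)\},m,C,\BWT,\s{p})$, and the initial interval $(1,n)$ covers every row of $M_{\s{t}}$. I would observe that applying $\Step$ enumerates, for every $c\in\Deltasigma$ with $c\cap\s{p}[m]\neq\emptyset$, the pair $(C[c]+1,\,C[c]+\rank_c(L,n))$, which by the standard first-column characterization of the BWT matrix is the maximal block of rows of $M_{\s{t}}$ whose first letter equals $c$. Since $c\cap\s{p}[m]\neq\emptyset$, $\s{p}[m\pp m]$ is a degenerate prefix of every such row; conversely, letting $c$ range over all admissible elements of $\Deltasigma$ captures every row on which $\s{p}[m\pp m]$ is a degenerate prefix, while the restriction $c\cap\s{p}[m]\neq\emptyset$ excludes any row without this property.

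For the inductive step, I would assume the statement for $k+1$ and set $\H'=\Stepi(k+1,C,\BWT,\s{p})$; by the hypothesis, an index $h$ lies in some interval of $\H'$ precisely when $\s{p}[k+1\pp m]$ is a degenerate prefix of $M_{\s{t}}[h]$. I would then appeal to the classical LF-mapping argument underlying BWT backward search: for a fixed $(i',j')\in\H'$ and a fixed $c\in\Deltasigma$, the pair $(C[c]+\rank_c(L,i'-1)+1,\,C[c]+\rank_c(L,j'))$ is exactly the set of rows of $M_{\s{t}}$ whose first letter equals $c$ and whose image under one left cyclic rotation lies in $[i',j']$. Because $M_{\s{t}}$ lists all cyclic rotations of $\s{t}$, this is equivalent to saying that the row starts with $c$ and its length-$(m-k)$ tail admits $\s{p}[k+1\pp m]$ as a degenerate prefix. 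Adding the constraint $c\cap\s{p}[k]\neq\emptyset$ imposes precisely the matching condition at position $k$, so the combined condition becomes that $\s{p}[k\pp m]$ is a degenerate prefix of $M_{\s{t}}[h]$. Ranging over every $(i',j')\in\H'$ and every admissible $c$ then delivers both soundness (no spurious rows appear) and completeness (no matching row is missed) for $\Stepi(k,C,\BWT,\s{p})$.

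The main obstacle I anticipate is justifying rigorously that the classical LF-mapping correctness transfers verbatim to the degenerate setting. The key observation is that once $\Deltasigma$ is viewed as an ordered alphabet in its own right, the definitions of $M_{\s{t}}$, of the array $C$ and of the function $\rank_c$ are literally the same as in the solid-string case, so the bijection between rows with last letter $c$ and the block of rows starting with $c$ still holds. The novelty of the degenerate case is confined to the fact that $\Step$ branches over every $c\in\Deltasigma$ that intersects $\s{p}[k]$; this is what turns the single-interval backward search into a set-valued one, and once this point is in place both directions of the biconditional follow immediately from the induction.
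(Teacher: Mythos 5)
Your proposal is correct and follows essentially the same route as the paper: a downward induction on $k$ tracking the recursive definition of $\Stepi$, with the standard first-column/LF-mapping characterization of the BWT matrix supplying both directions of the biconditional at each step. Your write-up is in fact more explicit than the paper's (which compresses the key step into ``by the definition of the BWT'' and ``by definition of the array $C$ and of the rank function''), but the underlying argument is identical.
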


\begin{proof}
$\Longrightarrow$:
By induction.
By definition of the array $C$ when $(i,j)\in\Stepi(m,C,\BWT,\s{p})$ then
 $\s{p}[m]$ is a degenerate prefix of $M_{\s{t}}[h]$, for $i\le h\le j$.
So assume that the property is true for all the integers $k'$ such that $k< k' \le m$.
If $(r,s)\in \Stepi(k,C,\BWT,\s{p})$ then $r=C[a]+\rank_a(\BWT,i-1)+1$ and $s=C[a]+\rank_a(\BWT,j)$
 with $r \le s$, $(i,j)\in \Stepi(k+1,C,\BWT,\s{p})$, $a\in\Deltasigma$ and $a \cap \s{p}[k] \ne\emptyset$.
Thus by the definition of the BWT, $\s{p}[k\pp m]$ is a degenerate prefix of rows of $M_{\s{t}}[h]$ for $r\le h\le s$.

$\Longleftarrow$:
By induction.
By definition, if $\s{p}[m]$ is a degenerate prefix of $M_{\s{t}}[h]$ for $r\le h\le s$ then $(r,s)\in \Stepi(m,C,\BWT,\s{p})$.
So assume that the property is true for all integers $k'+1$ such that $k<k' \le m$.
If $\s{p}[k+1\pp m]$ is a degenerate prefix of $M_{\s{t}}[h]$ for $i\le h\le j$ then
 $(i,j)\in \Stepi(k+1,C,\BWT,\s{p})$.
 When $\s{p}[k\pp m]$ is a degenerate prefix of $M_{\s{t}}[h]$ for $r\le h\le s$ then
 $(r,s) \in \Step(\Stepi(k+1,C,\BWT,\s{p}),i,C,\BWT,\s{p}) = \Stepi(k,C,\BWT,\s{p})$
 by definition of the array $C$ and of the rank function.

We conclude that the property holds for $1\le k \le m$.
\end{proof}

\begin{corollary}
\label{coro-stepi}
The interval $(i,j)\in \Stepi(1,C,\BWT,\s{p})$ if and only if $\s{p}$ is a degenerate prefix of
 $M_{\s{t}}[h]$ for $i\le h\le j$.
\end{corollary}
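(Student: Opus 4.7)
The plan is to obtain this corollary as an immediate specialization of Lemma~\ref{lemm-stepi} at the value $k=1$. The lemma states that for every $1\le k\le m$, the interval $(i,j)$ belongs to $\Stepi(k,C,\BWT,\s{p})$ if and only if $\s{p}[k\pp m]$ is a degenerate prefix of $M_{\s{t}}[h]$ for all $h$ in the range $i\le h\le j$. Plugging in $k=1$, the substring $\s{p}[k\pp m]$ becomes $\s{p}[1\pp m]$, which by our convention for writing strings is precisely $\s{p}$ itself.

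Concretely, I would state the proof in a single sentence: apply Lemma~\ref{lemm-stepi} with $k=1$ and observe that $\s{p}[1\pp m]=\s{p}$, so both directions of the biconditional in the corollary are instances of the corresponding directions of the lemma. No further induction or case analysis is needed, since the lemma is already quantified over all admissible $k$.

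There is essentially no obstacle here; the only thing worth double-checking is that $k=1$ really lies in the range covered by the lemma (it does, since the lemma is proved for $1\le k\le m$), and that the definitional chain $\Stepi(1,C,\BWT,\s{p})=\Step(\Stepi(2,C,\BWT,\s{p}),1,C,\BWT,\s{p})$ is the same object the lemma talks about at $k=1$, which is immediate from the recursive definition of $\Stepi$ given just before the lemma.
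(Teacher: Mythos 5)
Your proposal is correct and matches the paper's intent exactly: the corollary is stated without proof precisely because it is the $k=1$ instance of Lemma~\ref{lemm-stepi}, with $\s{p}[1\pp m]=\s{p}$. Nothing further is needed.
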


The proposed algorithm, see \figurename~\ref{figu-algo-dbs}, computes
 $\Stepi(1,C,\BWT,\s{p})$ by first initializing the variable $\H$ with $\{(1,n)\}$
 and then performing steps $m$ to $1$, while
 exiting whenever $\H$ becomes empty.

\begin{figure}[t]
\begin{algo}{DegenerateBackwardSearch}{\s{p},m,\BWT=(L,h),n,C}
  \SET{\H}{\{(1,n)\}}
  \SET{k}{m}
  \label{outerloop-begin}
  \DOWHILE{\H \ne \emptyset \mbox{ and } k \ge 1}
    \SET{\H'}{\emptyset}
    \DOFOR{ (i,j) \in \H}
      \DOFOR{c\in\Deltasigma \mbox{ such that } c \cap \s{p}[k]\ne\emptyset}
        \SET{\H'}{\H' \cup \{(C[c]+\rank_c(L,i-1)+1, C[c]+\rank_c(L,j))\}}
      \OD
    \OD
    \SET{\H}{\H'}
    \label{outerloop-end}
    \SET{k}{k-1}
  \OD 
  \RETURN{\H}
\end{algo}
\caption{
\label{figu-algo-dbs}
Backward search for a degenerate pattern in the BWT of a degenerate string. 
}
\end{figure}

The following two lemmas show that the number of intervals in $\H$ cannot grow exponentially.

\begin{lemma}
\label{lemm-overlap}
The intervals in $\Step(\{(i,j)\},k,C,\BWT,\s{p})$ do not overlap.
\end{lemma}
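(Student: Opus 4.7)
The plan is to observe that every interval produced by $\Step(\{(i,j)\},k,C,\BWT,\s{p})$ is tagged by the letter $c\in\Deltasigma$ that generated it (i.e.\ the one satisfying $c\cap \s{p}[k]\neq\emptyset$ in the inner loop of the step formula), and that intervals tagged by distinct letters already lie in disjoint blocks of row indices of $M_{\s{t}}$.

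First I would fix a contributing letter $c$ and note that the produced pair $(r,s)=(C[c]+\rank_c(L,i-1)+1,\ C[c]+\rank_c(L,j))$ satisfies $C[c]+1\le r\le s\le C[c]+N_c$, where $N_c=\rank_c(L,n)$ is the total number of occurrences of $c$ in $L$. This is immediate from the monotonicity $0\le \rank_c(L,i-1)\le \rank_c(L,j)\le N_c$ of the rank function. Hence the whole interval $(r,s)$ lies inside the block $[C[c]+1,\ C[c]+N_c]$, which is precisely the range of rows of $M_{\s{t}}$ beginning with $c$.

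Next I would show that for two distinct contributing letters $c_1<c_2$ these blocks are disjoint: from $C[c]=\sharp\{i:L[i]<c\}$ we have $C[c_2]\ge C[c_1]+N_{c_1}$, so the blocks $[C[c_1]+1,\ C[c_1]+N_{c_1}]$ and $[C[c_2]+1,\ C[c_2]+N_{c_2}]$ do not meet. Consequently the intervals produced for $c_1$ and for $c_2$ are themselves disjoint, and since the inner loop emits at most one interval per letter, ranging over all contributing letters establishes the claim.

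The only delicate point, more than a real obstacle, is confirming that a single letter $c$ produces only one interval for a fixed $(i,j)$; but the formulas defining $r$ and $s$ depend only on $c$ and on $(i,j)$, so exactly one pair is emitted per $c$. Everything else follows directly from the partition of $[1,n]$ induced by the $C$ array and from the standard monotonicity of the rank function.
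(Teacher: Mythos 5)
Your proof is correct and follows the same approach as the paper's: one interval is emitted per contributing letter $c$, and each such interval lies in the block $[C[c]+1,\,C[c]+\rank_c(L,n)]$ of rows beginning with $c$, which are pairwise disjoint across distinct letters. The paper's own proof is a two-line assertion of exactly this fact; you have simply supplied the details it leaves implicit.
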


\begin{proof}
$\Step(\{(i,j)\},k,C,\BWT,\s{p})$ will generate one interval for every distinct letter $c\in\Deltasigma$ 
 such that $c \cap \s{p}[k] \ne\emptyset$.
Thus these intervals cannot overlap.
\end{proof}

\begin{lemma}
\label{lemm-overlaps}
The intervals in $\Step(\{(i,j),(i',j')\},k,C,\BWT,\s{p})$ with $i \leq j < i' \leq j'$ do not overlap.
\end{lemma}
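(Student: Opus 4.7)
The plan is to do a case analysis on the source interval (either $(i,j)$ or $(i',j')$) and on the letter $c \in \Deltasigma$ that produces each resulting interval. Two intervals in the output can either share the same source interval and differ in letter, or share a letter and differ in source, or differ in both. The proof proposal is as follows.

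First, note that two output intervals sharing the same source are handled directly by Lemma~\ref{lemm-overlap}. So the interesting case is when the two output intervals come from different sources. Here I split further on whether the generating letters are equal.

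If the generating letter is the same $c \in \Deltasigma$, the intervals are
$(C[c]+\rank_c(L,i-1)+1,\, C[c]+\rank_c(L,j))$ and $(C[c]+\rank_c(L,i'-1)+1,\, C[c]+\rank_c(L,j'))$. Since $j < i'$ implies $j \le i'-1$, monotonicity of $\rank_c(L,\cdot)$ gives $\rank_c(L,j) \le \rank_c(L,i'-1)$, and therefore $C[c]+\rank_c(L,j) < C[c]+\rank_c(L,i'-1)+1$, so the first interval ends strictly before the second begins. If instead the generating letters are different, say $c \ne c'$ with $c < c'$ (WLOG), then by definition of $C$ one has $C[c'] \ge C[c] + \sharp\{h \mid L[h] = c\} \ge C[c] + \rank_c(L,j')$. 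Hence every endpoint of a $c$-interval is at most $C[c']$, while every left endpoint of a $c'$-interval is at least $C[c']+1$; again the two intervals are disjoint. This is essentially the same structural argument that underlies Lemma~\ref{lemm-overlap}, applied across sources rather than within a single source.

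There is no real obstacle here beyond the careful bookkeeping of the case split; the argument is pure monotonicity of $\rank_c(L,\cdot)$ together with the block decomposition of $[1,n]$ induced by the cumulative counts $C[c]$. The only subtle point worth stating explicitly is that the hypothesis $j < i'$ is used strictly (so that $j \le i'-1$ as integers), which is what makes the same-letter case work.
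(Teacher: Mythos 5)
Your proof is correct and follows essentially the same route as the paper: dispatch the same-source case to Lemma~\ref{lemm-overlap}, then for different sources split on whether the generating letters coincide, using monotonicity of $\rank_c(L,\cdot)$ together with $j<i'$ in the equal-letter case and the disjoint blocks $[C[c]+1,\,C[c]+\sharp\{h\mid L[h]=c\}]$ in the distinct-letter case. If anything, your write-up is slightly more explicit than the paper's (e.g.\ the WLOG $c<c'$ and the inequality $C[c']\ge C[c]+\sharp\{h\mid L[h]=c\}$), but the underlying argument is identical.
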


\begin{proof}
From Lemma \ref{lemm-overlap}, the intervals generated from $(i,j)$ do not overlap, and the intervals generated from $(i',j')$ do not overlap.

Let $(r,s)$ be an interval generated from $(i,j)$, and
let $(r',s')$ be an interval generated from $(i',j')$.
Formally, let $r,s,c$ be such that 
$ r=C[c]+\rank_c(\BWT,i-1)+1$,
$ s=C[c]+\rank_c(\BWT,j)$,
$ c\in\Deltasigma \mbox{ and } c \cap \s{p}[k] \ne\emptyset$.
Let $r',s',c'$ be such that 
$ r'=C[c']+\rank_{c'}(\BWT,i'-1)+1$,
$ s'=C[c']+\rank_{c'}(\BWT,j')$,
$ c'\in\Deltasigma \mbox{ and } c' \cap \s{p}[k] \ne\emptyset$.

If $c\ne c'$ then $(r, s)$ and 
 $(r', s')$ cannot overlap
 since $C[c] \leq r \leq s < C[c]+\sharp\{ i \mid \s{t}[i] = c\}$ and $C[c'] \leq r' \leq s' < C[c']+\sharp\{ i \mid \s{t}[i] = c'\}$.
Otherwise if $c=c'$ then
 since $j<i'$,
it follows that $\rank_c(\BWT,j) < \rank_c(\BWT,i'-1)+1$
  and thus $(r,s)=(C[c]+\rank_c(\BWT,i-1)+1, C[c]+\rank_c(\BWT,j))$ and 
 $(r',s')=(C[c]+\rank_c(\BWT,i'-1)+1, C[c]+\rank_c(\BWT,j'))$ do not overlap.
\end{proof}

\begin{corollary}
\label{cor-overlap}
Let $H$ be a set of non-overlapping intervals.
The intervals in\\$\Step(H,k,C,\BWT,\s{p})$ do not overlap.
\end{corollary}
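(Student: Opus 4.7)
The plan is to reduce Corollary~\ref{cor-overlap} directly to the two previous lemmas by a pairwise argument, exploiting the fact that $\Step$ distributes over unions of its first argument, i.e.\
\[
\Step(H,k,C,\BWT,\s{p}) \;=\; \bigcup_{(i,j)\in H} \Step(\{(i,j)\},k,C,\BWT,\s{p}),
\]
which is immediate from the definition of $\Step$.

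First I would pick two arbitrary intervals $(r,s)$ and $(r',s')$ in $\Step(H,k,C,\BWT,\s{p})$, and fix their sources $(i,j),(i',j')\in H$ so that $(r,s)\in \Step(\{(i,j)\},\ldots)$ and $(r',s')\in \Step(\{(i',j')\},\ldots)$. There are then exactly two cases to dispatch. If $(i,j)=(i',j')$, then Lemma~\ref{lemm-overlap} directly gives that $(r,s)$ and $(r',s')$ do not overlap. Otherwise $(i,j)\ne(i',j')$, and since $H$ is by hypothesis a set of non-overlapping intervals, we may assume without loss of generality that $j<i'$; Lemma~\ref{lemm-overlaps} then applies verbatim to $\{(i,j),(i',j')\}$ and yields that $(r,s)$ and $(r',s')$ do not overlap.

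Since the two intervals were arbitrary, this shows that no two intervals in $\Step(H,k,C,\BWT,\s{p})$ overlap, which is exactly the conclusion. I do not foresee any real obstacle here: the corollary is essentially a bookkeeping step showing that Lemma~\ref{lemm-overlaps}, stated for two non-overlapping input intervals, lifts to arbitrary non-overlapping families. The only minor subtlety worth a sentence is justifying the ``without loss of generality'' reordering, which is legitimate because the non-overlapping hypothesis on $H$ totally orders its members by their endpoints.
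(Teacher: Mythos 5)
Your proof is correct and is precisely the argument the paper intends: the corollary is left unproved there as an immediate consequence of Lemmas~\ref{lemm-overlap} and~\ref{lemm-overlaps}, and your pairwise case split (same source interval versus two distinct, hence separated, source intervals in $H$) is exactly how that reduction goes. The distribution of $\Step$ over the elements of $H$ and the WLOG ordering of distinct non-overlapping intervals are both legitimate and adequately justified.
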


We can now state the complexity of the degenerate backward search.

\begin{theorem}
\label{theo-dbs}
The algorithm \CALL{DegenerateBackwardSearch}{p,m,\BWT,n,C} computes a set of intervals
 $\H$, where $(i,j)\in\H$ if and only if $\s{p}$ is a degenerate prefix of consecutive rows of $M_{\s{t}}[k]$ for $i\le k \le j$,
 in time $O(mn)$ for a constant size alphabet.
\end{theorem}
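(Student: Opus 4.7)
The plan is to split the statement into a correctness half and a complexity half, and handle each separately by combining the already-established lemmas.

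For correctness, the algorithm initializes $\H=\{(1,n)\}$ and $k=m$, and then repeatedly replaces $\H$ by $\Step(\H,k,C,\BWT,\s{p})$ while decrementing $k$. By a trivial unwinding of this loop, after $m$ iterations the contents of $\H$ equal $\Stepi(1,C,\BWT,\s{p})$ (or is empty, in which case the invariant still holds vacuously because no interval survived). Thus the correctness claim is a direct restatement of Corollary~\ref{coro-stepi}, and I would simply appeal to it.

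For the complexity, the key quantity to control is $|\H|$ at each iteration. I would first observe that at initialization $\H=\{(1,n)\}$ consists of a single, hence trivially non-overlapping, interval. Then by Corollary~\ref{cor-overlap}, every subsequent value of $\H$ produced by the outer loop is also a set of non-overlapping intervals contained in $[1,n]$. Since non-overlapping intervals in $[1,n]$ are pairwise disjoint subintervals, there can be at most $n$ of them; so $|\H|\le n$ throughout the computation.

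With this bound in hand, one iteration of the outer loop processes at most $n$ intervals, and for each interval the inner loop ranges over $c\in\Deltasigma$ with $c\cap\s{p}[k]\neq\emptyset$. Since $|\Deltasigma|=2^{|\alphabet|}-1$ is a constant when $|\alphabet|$ is constant, the inner loop performs $O(1)$ iterations, each of which does $O(1)$ work assuming the standard constant-time $\rank_c$ queries on the BWT. Multiplying the $O(n)$ cost per iteration by the at most $m$ iterations of the outer loop yields the claimed $O(mn)$ bound. The main subtlety in this proof is the bound $|\H|\le n$: without the non-overlap invariant provided by Lemmas~\ref{lemm-overlap} and~\ref{lemm-overlaps} and Corollary~\ref{cor-overlap}, one might naively fear an exponential blow-up in the number of intervals, so the whole argument pivots on that structural fact.
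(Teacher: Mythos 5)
Your proposal is correct and follows essentially the same route as the paper: correctness is delegated to Corollary~\ref{coro-stepi}, and the $O(mn)$ bound rests on the non-overlap invariant (the paper invokes Lemma~\ref{lemm-overlap}, and itself notes just after the theorem that Corollary~\ref{cor-overlap} bounds the number of intervals per step by $n$) together with the constant alphabet size. Your write-up simply makes explicit the $|\H|\le n$ bookkeeping and the constant-time $\rank$ assumption that the paper's two-sentence proof leaves implicit.
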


\begin{proof}
The correctness comes from Corollary~\ref{coro-stepi}.
The time complexity mainly comes from Lemma~\ref{lemm-overlap} and the fact that
 the alphabet size is constant.
\end{proof}

For conservative degenerate string the overall complexity of the search can be reduced.

\begin{theorem}
\label{theo-cons}
Let \s{t} be a conservative degenerate string over a constant size
 alphabet.
Let the number of degenerate letters of \s{t} be bounded by a constant $q$.
Then given the BWT of \s{t},
 all the intervals in the BWT of occurrences
 of a pattern \s{p} of length $m$ can be detected in time $O(qm^2)$.
\end{theorem}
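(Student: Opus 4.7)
The plan is to bound $|\H|$ uniformly at each step of the backward search and then sum the per-step work over the $m$ steps. By Corollary~\ref{cor-overlap}, the intervals held in $\H$ remain pairwise non-overlapping throughout the execution, so at any moment their ranges form disjoint subintervals of $[1,n]$ inside the column $L$. Writing $\H_k$ for the content of $\H$ just after $\s{p}[k]$ has been processed, so that $\H_{m+1} = \{(1,n)\}$ is the initial content and $\H_1$ is the final set, the body of the outer loop spends $O(|\H_{k+1}|)$ time when turning $\H_{k+1}$ into $\H_k$ (each interval is expanded over a constant-size alphabet). The total runtime is therefore $O\bigl(\sum_{k=1}^{m} |\H_{k+1}|\bigr)$, and it suffices to show that $|\H_k| = O(qm)$ for every $k$.

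Next I would analyse how $|\H|$ grows from $\H_{k+1}$ to $\H_k$ in one iteration. For a parent interval $(i,j) \in \H_{k+1}$, the iteration creates one child per distinct letter $c \in \Deltasigma$ that appears in $L[i..j]$ and satisfies $c \cap \s{p}[k] \ne \emptyset$. I would split these children into two groups according to whether $c$ is a singleton or a proper non-solid subset of $\alphabet$. Assuming the usual BWT pattern-matching setting in which $\s{p}[k]$ is a solid letter, the only singleton whose intersection with $\s{p}[k]$ is non-empty is $\{\s{p}[k]\}$ itself, so the singleton group contributes at most one child per parent interval, i.e.\ at most $|\H_{k+1}|$ singleton children in total.

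The heart of the argument is to bound the non-solid children globally, not per parent. Since $L$ is a permutation of the letters of $\s{t}$, it carries at most $q$ non-solid entries in total. Because the intervals of $\H_{k+1}$ are pairwise disjoint, the total number of non-solid positions lying inside $\bigcup_{(i,j) \in \H_{k+1}} L[i..j]$ is at most $q$, and each such position contributes to at most one non-solid child of its parent. Combining the two groups yields the recurrence $|\H_k| \le |\H_{k+1}| + q$, which together with $|\H_{m+1}| = 1$ gives $|\H_k| \le 1 + (m-k+1)q = O(qm)$ for every $k$. Summing the per-step work then produces the claimed $O(qm^2)$ total time.

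The main obstacle I expect when writing the full proof is the singleton bound used above: it rests on $\s{p}[k]$ being a solid letter, because otherwise up to $|\s{p}[k]|$ distinct singletons can intersect $\s{p}[k]$ and a single parent interval may spawn several singleton children, breaking the ``$+1$ per interval'' step. Either the statement is to be read in the standard BWT pattern-matching setting where $\s{p}$ is solid (so the plan above applies verbatim), or a more delicate counting is required to absorb the degenerate-pattern contribution, e.g.\ by exploiting the disjointness of the $L$-buckets $[C[c]\!+\!1,C[c\!+\!1]]$ for different singletons $c$, so that the overall recurrence still grows only by an additive constant per step.
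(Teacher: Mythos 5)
Your argument is correct in the setting where $\s{p}$ is solid, and it reaches the $O(qm)$ per-step interval bound by a genuinely different route from the paper. The paper counts statically, in terms of occurrences of $\s{p}[k\pp m]$ in the text: occurrences touching none of the at most $q$ degenerate letters of $\s{t}$ all select the same solid length-$(m\m k\+1)$ prefix of the sorted rotations and hence contribute $O(q)$ intervals (the paper writes $q+1$; for a solid pattern it is really $1$), while each degenerate letter of $\s{t}$ is overlapped by at most $m$ starting positions and hence contributes at most $m$ further intervals, for $O(qm)$ intervals at each of the $m$ steps. You instead argue dynamically about the algorithm itself, charging every non-solid child interval to a distinct non-solid position of $L$ (using the disjointness of parents from Corollary~\ref{cor-overlap}) to get the recurrence $|\H_k|\le|\H_{k+1}|+q$ and hence $|\H_k|\le 1+(m\m k\+1)q$. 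Your version is tighter, makes the ``constant work per interval'' explicit, and is more directly tied to the data structure; the paper's is shorter but leaves both the $q+1$ count and the per-interval cost implicit. Note only that your counting presupposes that empty children ($r>s$) are discarded, which the definition of $\Step$ guarantees but the pseudo-code does not state.

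Your closing worry is well founded, and it applies equally to the paper's own proof, which silently makes the same solid-pattern assumption when it claims that occurrences avoiding all degenerate letters of $\s{t}$ yield only $q+1$ intervals. If $\s{p}[k]$ may be non-solid the bound genuinely fails, and no refinement of the bucket argument can save it: over $\Sigma=\{a,b,c\}$, take a solid text containing every length-$\ell$ string over $\Sigma$ and the pattern $\s{p}=(\{a,c\})^{m}$. At the step where the matched suffix has length $\ell$, the matching factors are the $2^{\ell}$ strings over $\{a,c\}$, and any two of them that are consecutive in lexorder are separated by a string containing $b$ that also occurs in the text, so they form $2^{\ell}$ pairwise non-adjacent intervals even after merging, with $q=0$. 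So the theorem needs the hypothesis that $\s{p}$ is solid (or that the degeneracy of $\s{p}$ enters the bound); under that hypothesis your proof is complete.
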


\begin{proof}
The number of intervals of occurrences of \s{p} in \s{t} that do not overlap
 a degenerate letter is bounded by $k+1$.
The number of intervals of occurrences of \s{p} in \s{t} that overlap one degenerate
letter is bounded by $k$ times $m$.
Since there are at most $k$ degenerate letters in \s{t} and since
 the backward search has $m$ steps the result follows.
\end{proof}

From Corollary \ref{cor-overlap},  the number of intervals at each step of the backward search cannot exceed $n$.
However, in practice, it may be worthwhile decreasing the number of intervals further:
 the next lemma shows that adjacent intervals can be merged.
In order to easily identify adjacent intervals we will now store them in a sorted list-like data
 structure as follows.
For two lists $I$ and $J$ the concatenation of the elements of $I$ followed by
 the elements of $J$ is denoted by $I\cdot J$.

We proceed to define the operation $\Merge$ that consists in merging two adjacent intervals:
$\Merge(\emptyset) = \emptyset$
and
$\Merge((i,j)) = ((i,j))$,
$\Merge(((i,j),(j+1,j'))\cdot I)=\Merge(((i,j'))\cdot I)$,
$\Merge(((i,j),(i',j'))\cdot I)=((i,j))\cdot \Merge(((i',j'))\cdot I)$ for $i'>j+1$.
The next lemma justifies the merging of adjacent intervals in $\H$.
 
\begin{lemma}
\label{lemm-merge}
$\Merge(\Step(((i,j),(j+1,j')),k,C,\BWT,\s{p}))=$\\$\Merge(\Step(((i,j')),k,C,\BWT,\s{p}))$.
\end{lemma}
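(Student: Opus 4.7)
The plan is to fix a letter $c \in \Deltasigma$ with $c \cap \s{p}[k] \ne \emptyset$ and track the contributions of $c$ to the three sets $\Step(((i,j)),k,\ldots)$, $\Step(((j{+}1,j')),k,\ldots)$, and $\Step(((i,j')),k,\ldots)$. The key arithmetic observation is that with $s_1 = C[c]+\rank_c(L,j)$ and $r_2 = C[c]+\rank_c(L,j)+1$, we always have $r_2 = s_1 + 1$, so whenever both of the $c$-contributions from $(i,j)$ and $(j{+}1,j')$ are non-empty they are perfectly adjacent, and their merge equals $(C[c]+\rank_c(L,i-1)+1,\,C[c]+\rank_c(L,j'))$, which is exactly the $c$-contribution produced from $(i,j')$.

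Next I would dispatch the degenerate cases in which one of the two $c$-contributions is empty (i.e.\ the $r \le s$ test fails). If $c$ does not occur in $L[i\pp j]$, then $\rank_c(L,i-1) = \rank_c(L,j)$, so $(i,j)$ produces no interval for $c$, while the interval produced from $(j{+}1,j')$ is $(C[c]+\rank_c(L,i-1)+1,\,C[c]+\rank_c(L,j'))$, matching the interval from $(i,j')$. The symmetric case in which $c$ does not occur in $L[j{+}1\pp j']$ is identical. If neither contains $c$, both sides yield the empty set for $c$. Thus, letter by letter, the contributions to the two sides of the claimed equality agree.

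Finally I would argue that the outer $\Merge$ operation sees exactly the same sorted list of intervals on both sides. By Lemma~\ref{lemm-overlap} (applied per letter), and the fact that intervals generated from different letters $c \ne c'$ lie in disjoint blocks $[C[c],\,C[c]+\sharp\{i \mid \s{t}[i]=c\})$, the per-letter intervals are disjoint and can be ordered by their letter ranges. The preceding paragraphs show that on each letter block the two sides produce the same (possibly empty) interval. Hence when the lists are assembled in sorted order and $\Merge$ is applied, any adjacency between blocks of distinct letters occurs on both sides identically, and the two final outputs coincide.

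The main obstacle I expect is purely bookkeeping: keeping the empty-interval cases straight and verifying that $\Merge$ is insensitive to inserting or omitting an empty slot in the middle of the list. Once the per-letter identity $r_2 = s_1 + 1$ is isolated, the rest is a careful case distinction on whether each of $\rank_c(L,i-1)<\rank_c(L,j)$ and $\rank_c(L,j)<\rank_c(L,j')$ holds.
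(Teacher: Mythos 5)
Your proposal is correct and follows essentially the same route as the paper's proof: both rest on the per-letter observation that the interval generated from $(j+1,j')$ starts at $C[c]+\rank_c(L,j)+1$, i.e.\ immediately after the one generated from $(i,j)$ ends, so the two contributions merge into exactly the interval that $(i,j')$ would produce. You go further than the paper in explicitly treating the cases where one $c$-contribution is empty and in checking that the cross-letter assembly is unaffected --- details the paper's one-line argument leaves implicit --- but the underlying idea is identical.
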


\begin{proof}
For a letter $c\in\Deltasigma$ such that $c\cap \s{p}[k] \ne \emptyset$ the intervals generated
 from $(i,j)$ and $(j+1,j')$ are, by definition, necessarily adjacent
 which shows that if $(p,q)\in\Merge(\Step(((i,j),(j+1,j')),k,C,\BWT,\s{p}))$ 
 then $(p,q)\in\Merge(\Step(((i,j')),k,C,\BWT,\s{p}))$.
The reciprocal can be shown similarly.
\end{proof}

This means that $\H$ can be implemented with an efficient data structure such as red-black trees
 adapted for storing non-overlapping
 and non-adjacent intervals.

\section{Experiments}
\label{sect-expe}

We ran algorithm \textsc{DegenerateBackwardSearch} (DBS) for searching for the
 occurrences
 of a degenerate pattern in different random strings: solid strings, degenerate strings
 and conservative degenerate strings.
The alphabet consists of subsets of the DNA alphabet encoded by integers
 from 1 to 15.
Solid letters are encoded by powers of 2 (1, 2, 4 and 8) as in~\cite{bwbble}.
Then intersections between degenerate letters can be performed by a bitwise \texttt{and}.
The conservative string contains $500,000$ degenerate letters.

We also ran the adaptive Hybrid pattern-matching algorithm of~\cite{SWY08},
 and, since the alphabet size is small
 we also ran a version of the Backward-Non-Deterministic-Matching (BNDM)
 adapted for degenerate pattern matching (see \cite{NR2002}).
The Hybrid and BNDM are bit-parallel algorithms and have only been tested for pattern
 lengths up to 64.
The patterns have also been randomly generated.
For the computation of the BWT we used the SAIS library~\cite{sais} and for its implementation
 we used the SDSL library~\cite{gbmp2014sea}.
All the experiments have been performed on a computer with a 1.3 GHz Intel Core
 i5 processor and 4 GB 1600 MHz DDR3 RAM.

We performed various experiments and present only two of them.
For DBS the measured times exclude the construction of the BWT
 but include the reporting of the occurrences using a suffix array.
This can be justified by the fact that, in most cases, strings are given in a compressed form
 through their BWTs.
\figurename~\ref{figu-data3}(a) shows the searching times
 for different numbers of degenerate patterns of length $8$ in a solid string.
Times are in centiseconds.
It can be seen that when enough patterns have to be searched for
 in the same string then it is worth using the new DBS algorithm.
The BNDM algorithm performs better than the Hybrid one due to the small
 size of the alphabet which favors shifts based on suffixes of the pattern rather than
 shifts based on single letters.
\figurename~\ref{figu-data3}(b) shows the searching times
 for a degenerate pattern of length $8$ in conservative degenerate strings
 of various lengths
 (for each length the strings contain $10\%$ of degenerate letters).
As expected when the length of the string increases the advantage of using DBS
 also increases.

\begin{figure}[t]
\begin{center}
\setlength{\tabcolsep}{0pt}
\begin{tabular}{cc}
\includegraphics[scale=0.24,angle=-90]{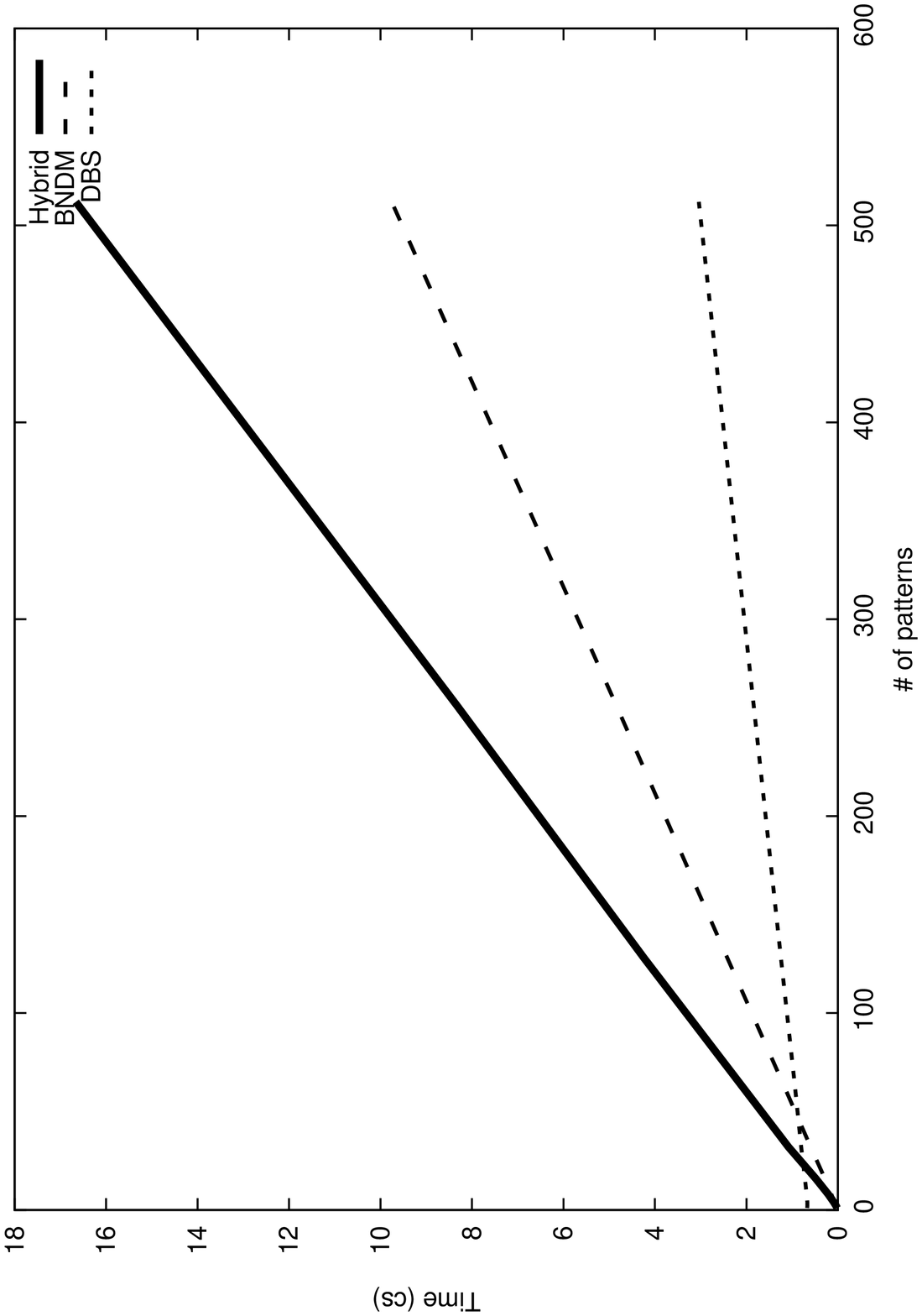}&\includegraphics[scale=0.24,angle=-90]{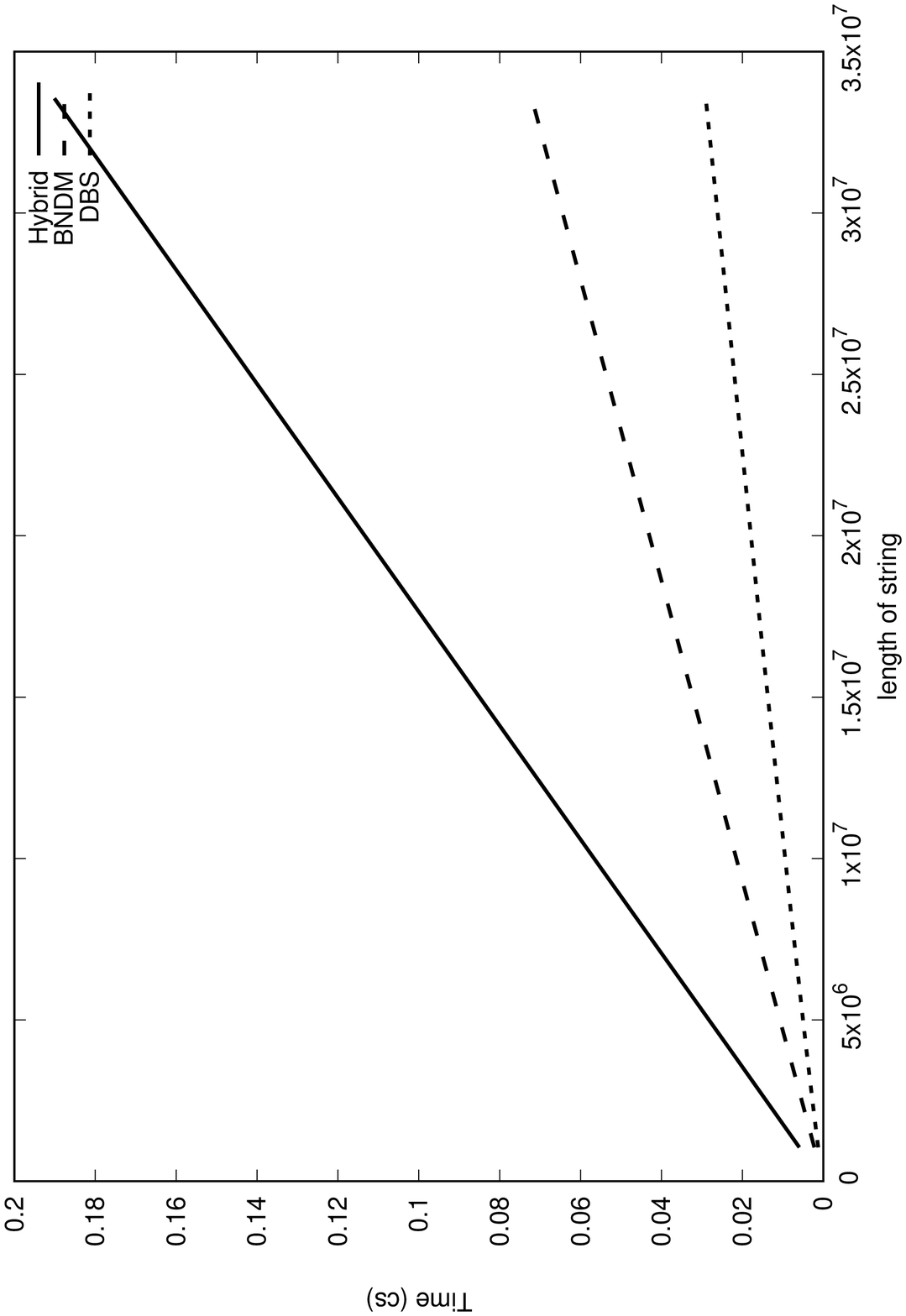}\\
(a)&(b)
\end{tabular}
\end{center}
\caption{
\label{figu-data3}
(a): Running times for searching for several degenerate patterns of length $8$ in a solid string of length $5$MB.
(b): Running times for searching for one degenerate pattern of length 8 in 
 a conservative degenerate string of variable length.
}
\end{figure}

%



\bibliographystyle{splncs03}
\bibliography{ref}


\end{document}